\newtheorem{theorem}{Theorem}[section]
\newtheorem{lemma}[theorem]{Lemma}
\newtheorem{conjecture}[theorem]{Conjecture}
\theoremstyle{definition}
\newtheorem*{acknowledgement}{Acknowledgement}
\theoremstyle{remark}
\newtheorem{remark}[theorem]{Remark} 
\numberwithin{equation}{section}
\numberwithin{figure}{section}
\DeclarePairedDelimiter{\abs}{\lvert}{\rvert}
\DeclarePairedDelimiter{\braket}{\langle}{\rangle}
\DeclarePairedDelimiter{\dbraket}{\langle\langle}{\rangle\rangle}
\DeclarePairedDelimiter{\normalorder}{:}{:}
\DeclareMathOperator*{\Res}{Res}
\DeclareMathOperator{\Tr}{Tr}
\DeclareMathOperator{\diag}{diag}
\newcommand*{\proj}{\mathcal{P}}
\newcommand*{\Mbar}{\widebar{\mathcal{M}}}
\newcommand*{\moduli}{\mathcal{M}}
\newcommand*{\halfgenus}{\mathfrak{h}}
\newcommand*{\LLtilde}{\skew{-5}{\widetilde}{\mathbb{L}}}
\newcommand*{\Lhat}{\widehat{L}}
\newcommand*{\Mhat}{\widehat{M}}
\newcommand*{\openL}{{\Lhat}^{\mathfrak{o}}}
\newcommand*{\openM}{{\Mhat}^{\mathfrak{o}}}
\newcommand*{\tBold}{\mathbf{t}}
\newcommand*{\CC}{\mathbb{C}} 
\newcommand*{\LL}{\mathbb{L}} 
\newcommand*{\PP}{\mathbb{P}}
\newcommand*{\ZZ}{\mathbb{Z}}
\newcommand*{\cC}{\mathcal{C}}
\newcommand*{\cD}{\mathcal{D}}
\newcommand*{\cE}{\mathcal{E}} 
\newcommand*{\cH}{\mathcal{H}} 
\newcommand*{\cL}{\mathcal{L}} 
\newcommand*{\cM}{\mathcal{M}} 
\newcommand*{\cR}{\mathcal{R}}
\newcommand*{\cU}{\mathcal{U}}
\newcommand*{\fo}{\mathfrak{o}}
\begin{document}

\title{Topological recursion for open intersection numbers}
\subjclass[2010]{Primary 14H15; Secondary 14H70, 53D45}
\author[B. Safnuk]{Brad Safnuk}
\address{
    Department of Mathematics\\
    Central Michigan University\\
    Mount Pleasant, MI 48859, U.S.A.}
\email{brad.safnuk@cmich.edu}

\begin{abstract}
 We present a topological recursion formula for calculating the intersection numbers defined on the moduli space of open Riemann surfaces.
 The spectral curve is $x = \frac{1}{2}y^2$, the same as spectral curve used to calculate intersection numbers for closed Riemann surfaces, but the formula itself is a variation of the usual Eynard-Orantin recursion. It looks like the recursion formula used for spectral curves of degree 3, and also includes features present in $\beta$-deformed models. The recursion formula suggests a conjectural refinement to the generating function that allows for distinguishing intersection numbers on moduli spaces with different numbers of boundary components.

 \end{abstract}

\maketitle

\section{Introduction}
In \cite{Pandharipande:2014fk}, Pandharipande, Solomon and Tessler constructed a rigorous theory of intersection theory on the moduli space of the disk, and proved that the generating function for these numbers obey a number of constraint conditions that are direct analogues of the KdV equation and Virasoro constraints for intersection theory on moduli spaces of closed Riemann surfaces
(c.f. \cite{Kontsevich:1992fv, MR1144529, MR1083914}).
These constraints uniquely specify the intersection theory for higher genera intersection numbers,  and led to conjectural equations for the resulting generating functions.
In particular, they conjectured a Virasoro constraint condition, and a solution of an integrable system that they termed the open KdV equation.
Later, it was shown by Buryak \cite{Buryak:2014kx} that these two systems of differential equations are, in fact, compatible.
In so doing, he demonstrated that the open KdV equations form a part of a larger hierarchy, called the Burgers-KdV equations, which he conjectured to be the correct framework for introducing descendent integrals for the marked points appearing on the boundary of the surfaces.
These conjectures have been proven in \cite{Buryak:2015uq,Tessler:2015ys}, with the caveat that the rigorous construction of the necessary moduli spaces has been announced by Solomon and Tessler, but as of the writing of this paper, has not yet appeared.

Alexandrov \cite{Alexandrov:2015kq} constructed a solution to the Burgers-KdV hierarchy based on the Kontsevich-Penner matrix model. The partition function for this matrix model was shown to satisfy the so-called MKP hierarchy. In addition, Alexandrov \cite{Alexandrov:2014gfa} constructed a $W$-algebra constraint on this function. 

In the present work, we take Alexandrov's $W$-constraint as the starting point, and show that it is equivalent to a topological recursion equation for the generating function of open intersection numbers. We formulate this recursion in two ways: first as a master equation, in the sense of Kazarian and Zograf \cite{Kazarian:2014ys} (also present in \cite{Eynard:2007kx}), and then as a residue calculation.

The spectral curve turns out to be the same as the spectral curve for the Witten-Kontsevich generating function (c.f. \cite{MR2944483, Bergere:2009ve}), namely
\begin{align*}
    x &= \frac{1}{2}z^2 \\
    y &= z,
\end{align*}
however, the topological recursion formula itself is rather unusual, in that it combines aspects of the topological recursion formula for curves with higher order branch points 
(c.f. \cite{Bouchard:2013ss, MR3147410, Ferrer:2010fk}), as well as $\beta$-deformed topological recursion, as constructed in \cite{Bergere:2012dp, MR3165804}.

We also explore a potential refinement of the open intersection numbers by incorporating a grading parameter $Q$, which separates the components of a given moduli space by the number of boundary components in the underlying surface.
We conjecture that the resulting generating function is given by a parametrized version of the Kontsevich-Penner matrix model. If true, it would immediately imply a quantum curve equation for the principal specialization of the generating function:
\begin{equation*}
    \left(\hbar^3 \frac{d^3}{dx^3} - 2\hbar x \frac{d}{dx} + 2\hbar(Q-1)\right) e^{\Psi_Q} = 0.
\end{equation*}
Note that the quantum curve for open intersection numbers (without the $Q$ grading) is given by substituting $Q=1$.

The paper is organized as follow.
In Section~\ref{sect:W-Constraint}, we review the recent results on open intersection numbers and, in particular, Alexandrov's construction of $W$-constraints on the generating function of open intersection numbers.
Then, in Section~\ref{sect:TR} we introduce the tools and notation used in topological recursion calculations.
In Section~\ref{sect:MasterEquation} we formulate the $W$-constraint condition as an equivalent system of master equations.
In Section~\ref{sect:Residue} we write the master equation as a residue integral.
In Section~\ref{sect:Q-Grading} we conjecture a refinement of the generating function which tracks the intersection numbers for moduli spaces of surfaces with different numbers of boundary components.
Finally, in Section~\ref{sect:QuantumCurve}, we derive the quantum curve for the $Q$-graded correlators.

\begin{acknowledgement}
    The author thanks Bertrand Eynard for useful discussions. During the preparation of this paper, the author received support from the Max Planck Institute for Mathematics in Bonn, the Institute des Hautes \'Etudes Scientifique, and the National Science Foundation through grants 1002477 and DMS-1308604. 
\end{acknowledgement}

\section{Open intersection numbers and Alexandrov's W-constraints}
\label{sect:W-Constraint}

There are many effective methods for computing so-called descendent integrals on moduli spaces of closed Riemann surfaces:
\begin{equation*}
    \braket{\tau_{a_1} \cdots \tau_{a_l}}_g = \int_{\Mbar_{g,l}} \psi_1^{a_1} \cdots \psi_{l}^{a_l}, 
\end{equation*}
where $\psi_i$ is the first chern class of the natural line bundle on $\Mbar_{g,l}$ given by taking the cotangent space of the curve at the $i$-th marked point. The approach pioneered by Witten is to collect them into generating functions
\begin{align*}
    F_g^{\text{WK}} (T_0, T_1, \ldots) &= \sum_{n=0}^{\infty} \frac{1}{n!}\braket{\gamma^n}_g, \\
    F^{\text{WK}} &= \sum_{g=0}^{\infty} u^{2g-2} F_g^{\text{WK}}, \\
    \tau_{\text{WK}} & = e^{F^{\text{WK}}},
\end{align*}
for $\gamma = \sum T_k \tau_k$.
He conjectured \cite{MR1144529}, (proven by Kontsevich \cite{Kontsevich:1992fv}) that $\tau_{\text{WK}}$ is a $\tau$-function for the KdV hierarchy, with KP times $\{t_k\}$ given by $T_k = (2k+1)!! t_{2k+1}$.
Equivalently, there is a family of differential operators $L_a^{\text{WK}}$ ($a \geq -1$) that annihilate the generating function:
\begin{equation*}
    L^{\text{WK}}_n \tau_{\text{WK}} = 0,
\end{equation*}
and satisfy the commutation relations of (half of) the Virasoro algebra
$[L^{\text{WK}}_n, L^{\text{WK}}_m] = (n-m) L^{\text{WK}}_{n+m}$.

In \cite{Pandharipande:2014fk}, Pandharipande, Solomon, and Tessler began extending descendent integration to moduli spaces of open Riemann surfaces, or more precisely, Riemann surfaces with boundary.
In particular, a Riemann surface with boundary $(X, \partial X)$ is a 1-dimensional complex manifold with finitely many circular boundaries, each with a holomorphic collar structure.
The double of $(X, \partial X)$, denoted $D(X, \partial X)$, is the closed Riemann surface obtained by Schwarz reflection across the boundary of $X$.
If $X$ has genus $g$ and $b$ boundary components then we define the augmented genus of $X$, 
$\halfgenus(X) = g + b/2$.
The genus of the double of $X$ is given by $2\halfgenus(X) - 1$.
We define $\moduli_{h, k, l}$ to be the moduli space of (possibly open) Riemann surfaces $X$ with $\halfgenus(X)= h$, $k$ marked points on the boundary of $X$, and $l$ interior marked points. We note the slightly different conventions then those used by \cite{Pandharipande:2014fk}, where in particular we use parameter $\halfgenus(X)$ instead of the genus of the doubled surface, and we consider the moduli space of closed Riemann surfaces $\moduli_{h,l}$ to be a connected component of $\moduli_{h, 0, l}$. It is also worth pointing out that $h$ can be any non-negative integer or half-integer.

$\moduli_{h, k, l}$ is a real orbifold of dimension $6h - 6 + k + 2l$. At interior marked points we have cotangent line bundles
\begin{equation*}
\LL_i \rightarrow \Mbar_{h, k, l} \quad\quad i=1, \ldots, l,
\end{equation*}
and we wish to consider $\psi_i = c(\LL_i) \in H^2(\Mbar_{h,k,l})$. Furthermore, one can construct cotangent line bundles
\begin{equation*}
{\LLtilde}_j \rightarrow \Mbar_{h, k, l} \quad\quad j=1, \ldots, k
\end{equation*}
for the marked points on the boundary and consider $\phi_j = c({\LLtilde}_j) \in H^2(\Mbar_{h, k, l})$.

If such constructions can be made rigorous, then one could calculate descendent integrals
\begin{equation}
    \braket{\tau_{a_1} \ldots \tau_{a_l} \sigma_{b_1} \cdots \sigma_{b_k}}_h^{\fo}
    = \int_{\Mbar_{h,k,l}} \psi_1^{a_1} \cdots \psi_l^{a_l} \phi_1^{b_1} \cdots \phi_k^{b_k},
\end{equation}
and the resulting generating functions
\begin{align*}
    F_h(T_0, T_1, \ldots ; S_0, S_1, \ldots) &= 
    \sum_{k,l=0}^{\infty} \frac{1}{k! l!} \braket{\gamma^k \lambda^l}_h^\fo \\
    F &= \sum_{2h \in \ZZ_{\geq 0}} u^{2h-2} F_h \\
    \tau_\fo &= e^{F},
\end{align*}
where $\gamma = \sum T_j \tau_j$ and $\lambda = \sum S_j \sigma_j$.
Although it is not yet possible to define open intersection numbers in full generality, \cite{Pandharipande:2014fk} contains a rigorous treatment for $h=0, \frac{1}{2}$ and $S_i = 0 \forall i \geq 1$, while the construction for arbitrary $h$ and $S_i$ has been announced by Solomon and Tessler. 

Based on their analysis of descendent integrals on the disk, Pandharipande, Solomon and Tessler \cite{Pandharipande:2014fk} conjectured a Virasoro constraint and open KdV equation for the generating function of open intersection numbers. In particular, for $Z = \tau_\fo(T_0, T_1, \ldots, S_0=S, S_1=0, S_2=0, \ldots)$ they conjectured that
\begin{equation*}
    \cL_n Z = 0 \quad \text{for $n=-1, 0, \ldots$},
\end{equation*}
where
\begin{multline*}
    \cL_n = - \frac{(2n+3)!!}{2^{n+1}} \frac{\partial}{\partial T_{n+1}}
    + \sum_{a=0}^{\infty} \frac{(2(a+n) + 1)!!}{2^{n+1} (2a-1)!!} T_a \frac{\partial}{\partial T_{a+n}} \\
    + \frac{u^2}{2^{n+1}} \sum_{a+b = n-1} (2a+1)!! (2b+1)!! \frac{\partial^2}{\partial T_a \partial T_b}
    + \delta_{k, -1} u^{-2} \frac{T_0^2}{2} + \delta_{n, 0} \frac{1}{16} \\
    + u^n S \frac{\partial^{n+1}}{\partial S^{n+1}} 
    + \frac{3n + 3}{4}u^n \frac{\partial^n}{\partial S^n}
\end{multline*}
satisfy commutation relations $[\cL_n, \cL_m] = (n-m)\cL_{n+m}$.

In addition, they conjectured that the generating function satisfies the following open KdV equations
\begin{multline*}
    (2n+1) \dbraket{\tau_n}^\fo = u \dbraket{\tau_{n-1}\tau_0} \dbraket{\tau_0}^\fo
    + 2 \dbraket{\tau_{n-1}}^\fo \dbraket{\sigma_0}^\fo \\
    + 2 \dbraket{\tau_{n-1}\sigma_0}^\fo
    - \frac{u}{2}\dbraket{\tau_{n-1} \tau_0^2},
\end{multline*}
where
\begin{align*}
    \dbraket{\tau_{a_1} \dots \tau_{a_l}} &= \frac{\partial}{\partial T_{a_1}} \cdots
    \frac{\partial}{\partial T_{a_l}} F^{\text{WK}}(T_0, T_1, \ldots) \\
    \dbraket{\tau_{a_1} \dots \tau_{a_l} \sigma_0^k}^\fo &= \frac{\partial}{\partial T_{a_1}} \cdots
    \frac{\partial}{\partial T_{a_l}} \frac{\partial^k}{\partial S^k}
    F^{\fo}(T_0, T_1, \ldots; S_0=S, S_1=0, S_2=0, \ldots).
\end{align*}
Many details of the proofs of these conjectures are in \cite{Buryak:2015uq, Tessler:2015ys}, with the remaining part (chiefly the construction of the compact moduli spaces and extensions of the line bundles to the boundary) announced by Solomon and Tessler.

Independent of the conjectures themselves, Buryak proved \cite{Buryak:2014kx} that the Virasoro constraint and open KdV equations are consistent and compatible with each other.
In so doing, he introduced an extended  $\tau$-function with additional parameters $S_1,S_2, \ldots$, that satisfied the Burgers-KdV hierarchy. He conjectured that these parameters introduce descendent integration with respect to the cotangent bundles from the boundary marked points.

Using the solution of the Burgers-KdV hierarchy as the starting point, Alexandrov \cite{Alexandrov:2015kq, Alexandrov:2014gfa} related the generating function of open intersection numbers to the Kontsevich-Penner matrix model
\begin{equation}
    \label{eqn:MatrixIntegral}
    \tau_Q = \det(\Lambda)^Q \cC^{-1}
    \int_{\cH_N} [d\Phi] \exp \left(
        -\Tr \Bigl( \frac{\Phi^3}{3!} - \frac{\Lambda^2\Phi}{2}
        + Q \log \Phi \Bigr)
    \right),
\end{equation}
where $Q \in \ZZ$, $\Lambda = \diag(\lambda_1, \ldots, \lambda_N)$, the integral is over the space of hermitian $N \times N$ matrices, and
\begin{equation*}
    \cC = e^{\Tr \Lambda / 3} \int d\Phi \exp \left(-\Tr \frac{\Lambda \Phi^2}{2} \right).
\end{equation*}
In particular, he conjectured that $\tau_{Q=1} = \tau_\fo$. Note that $\tau_{Q=0} = \tau_{\text{WK}}$.
In Section~\ref{sect:Q-Grading}, we formulate a conjectural relationship between $\tau_Q$ and a $Q$-graded refinement of the open intersection generating function. 

Using standard matrix model techniques, Alexandrov constructed families of operators that annihilate $\tau_1$, and satisfy the commutation relations of the $W^{(3)}$ algebra. 
He also showed that $\tau_1$ is a $\tau$-function for the KP-hierarchy, while the parameter $Q$ plays the role of a discrete time, making $\tau_Q$ a solution of the modified KP (MKP) hierarchy (c.f. \cite{Alexandrov:2013fj,MR723457}). Note that the natural KP times $\{t_k\}$ are related to the parameters $\{T_i\}$, $\{ S_i\}$ for the generating function by
\begin{align*}
T_i &= (2i+1)!! t_{2i+1} \\
    S_i &= 2^{i+1} (i+1)! t_{2i+2}.
\end{align*}

For the $W$-constraints of $\tau_1$, we define
\begin{equation}
    \label{eqn:LkOpenHat}
    \openL_k = L_{2k} + (k+2)J_{2k} + \delta_{k,0} 
    ( \frac{1}{8} + \frac{3}{2}) - J_{2k+3},
\end{equation}
and
\begin{multline}
    \label{eqn:MkOpenHat}
    \openM_k = M_{2k} + 2(k+3)L_{2k} - 2L_{2k+3} - 2(k+3)J_{2k+3} \\
    + (\frac{95}{12} + 6k + \frac{4}{3}k^2)J_{2k} + \frac{23}{4}\delta_{k,0}
    + J_{2k+6},
\end{multline}
where 
\begin{equation}
    \label{eqn:J-operator}
    J_{k} =
    \begin{cases}
        u^{1-a/3}\frac{\partial}{\partial t_k} & \text{ if $k>0$} \\
        0 & \text{ if $k=0$} \\
        (-k)u^{a/3-1} t_k & \text{ if $k<0$,}
    \end{cases}
\end{equation}
and $L_k$ and $M_k$ are the standard generators of the Virasoro and $W^{(3)}$-algebras respectively. Namely
\begin{equation}
    \label{eqn:L-operator}
    L_k = \frac{1}{2}\sum_{a+b=k} \normalorder{J_a J_b},
\end{equation}
\begin{equation}
    \label{eqn:M-operator}
    M_k = \frac{1}{3}\sum_{a+b+c=k} \normalorder{J_a J_b J_c},
\end{equation}
and $\normalorder{A}$ is the normal ordering of operator $A$.
In particular, $\normalorder{AB} = \normalorder{BA}$, while 
\begin{equation*}
    \normalorder{J_a J_b} = \begin{cases}
        J_a J_b & \text{ if $a \leq b$ } \\
        J_b J_a & \text{otherwise.}
    \end{cases}
\end{equation*}

Then Alexandrov proved \cite{Alexandrov:2014gfa} that for all $k \geq 0$
\begin{equation}
    \label{eqn:W-constraint}
    \openL_k \tau_1 = 0 = \openM_k \tau_1.
\end{equation}
In addition, these operators satisfy the commutation relations of generators of the $W^{(3)}$-algebra:
\begin{align*}
    [\openL_k, \openL_m] &= 2(k-m)\openL_{k+m}\\
    [\openM_k, \openL_m] &= 2(k-2m) \openM_{k+m} + 4m(m+1) \openL_{k+m}.
\end{align*}

As is the case with the $W^{(3)}$-algebra in general, the commutator $[\openM_k, \openM_m]$ cannot be represented as a linear combination of generators $\openL_\ell$ and $\openM_\ell$,
but is quadratic in $\openL_\ell$.


It turns out for our purposes to be more convenient to work with the following shifted operators:
\begin{align}
    {\Lhat}_k &= \openL_k \nonumber \\
    \label{eqn:hatL}
    &= L_{2k} + (k+2)J_{2k} + \delta_{k,0} 
    ( \frac{1}{8} + \frac{3}{2}) - J_{2k+3} \\
    {\Mhat}_k &= -\openM_k + 2(k+2){\Lhat}_k \nonumber \\
    \label{eqn:hatM}
              &= -M_{2k} + 2(L_{2k+3} - L_{2k}) + 2J_{2k+3} \\
    & \quad \ + (\frac{2}{3}k^2 + 2k + \frac{1}{12})J_{2k}
    + \frac{3}{4}\delta_{k,0} - J_{2k+6} \nonumber
\end{align}

\section{Topological recursion}
\label{sect:TR}

Topological recursion, as developed by Chekhov, Eynard and Orantin \cite{MR2222762,Eynard:2007kx}, originated as a method for calculating correlation functions for matrix models.
However, it was realized to be a more general construction, valid for any spectral curve (defined below), even those not arising from matrix models.
It was subsequently found to determine a large number of interesting enumerative and geometric invariants
(c.f. \cite{MR3335006,MR2746135,Do:2012lh,Dumitrescu:2013zr,MR2855174,MR2849645,Eynard:2007if,MR3339157,MR3268770,MR2944483,Dunin-Barkowski:2014ij}).

The initial data needed to apply topological recursion consists of a \emph{spectral curve} $(C, x, y)$,
where $C$ is a Torelli marked compact Riemann surface, and $x$ and $y$ are meromorphic functions on $C$.
We require that $x$ and $y$ generate the function field of $C$ (i.e. $K(C) = \CC(x, y)$), and that $x$ and $y$ separate tangents, so that we do not have $dx(p) = 0 = dy(p)$ at any point $p\in C$. We suppose that $x$ has degree $r$, and we call the zeros of $dx$ the \emph{branch points} of the spectral curve, denoted $\{a_1, \ldots, a_d\}$. 

Given the data of a spectral curve, topological recursion allows for the construction of an infinite family of \emph{correlation functions} $W_{g,n+1}(z_0, \ldots, z_n)$, defined for any $g,n \geq 0$. With the exception of $W_{0,1}$ and $W_{0,2}$, any correlation function $W_{g,n}$ is a symmetric meromorphic $n$-differential, with poles only at the branch points.

Although topological recursion is defined for spectral curves of arbitrary degree $r$, we restrict to the case of $r\leq 3$ for simplicity, and because that is sufficient for the example at hand.
We will also make the unecessary, though simplifying assumption, that the spectral curve has genus 0.

Given a spectral curve, the base cases for the correlation function are given by
\begin{align*}
    W_{0,1}(z) &= y(z)\,dx(z) \\
    W_{0,2}(z_0, z_1) &= B(z_0, z_1) = \frac{dz_0 \, dz_1}{(z_0 - z_1)^2}.
\end{align*}
We use topological recursion to calculate the remainder of the correlation functions, which utilizes the following constructions.

Given a family of symmetric $n$-differentials $\{W_{g,n}\}$,
and $\vec{z} = (z_1, \ldots, z_n)$, we define
\begin{multline*}
    \cE^{(2)}W_{g,n+1}(v, w; \vec{z}) = W_{g-1, n+2}(v, w, \vec{z}) \\
     + \sum_{\substack{g_1+g_2 = g \\ Z_1 \sqcup Z_2 = \vec{z}}}
     W_{g_1, \abs{Z_1}+1}(v, Z_1) W_{g_2, \abs{Z_2}+1}(w, Z_2),
\end{multline*}
\begin{multline*}
    \cE^{(3)}W_{g, n+1}(w_1, w_2, w_3; \vec{z})
    = W_{g-2, n+3}(w_1, w_2, w_3, \vec{z}) \\ 
     + \sum_{\substack{g_1+g_2 = g-1  \\ Z_1 \sqcup Z_2 = \vec{z}}}
     W_{g_1, \abs{Z_1}+2}(w_1, w_2, Z_1) W_{g_2, \abs{Z_2}+1}(w_3, Z_2) \\
     + \sum_{\substack{g_1+g_2+g_3 = g  \\ Z_1 \sqcup Z_2 \sqcup Z_3 = \vec{z}}}
     \prod_{i=1}^{3}W_{g_i, \abs{Z_i}+1}(w_i, Z_i),
\end{multline*}
and
\begin{equation*}
    P_{g,n+1}^{(k)}(z, \vec{z})
    = \sum_{\{w_1, \ldots, w_k\} \subset x^{-1}(x(z))}
    \cE^{(k)}_{g,n+1}(w_1, \ldots, w_k; \vec{z}).
\end{equation*}

We also define the recursion kernel 
\begin{equation*}
    K(z_0, z) = \frac{1}{r W_{0,1}(z)^{r-1}} \int_{\zeta=0}^z B(z_0,\zeta). 
\end{equation*}

Then we have the following topological recursion equation, called the global recursion in \cite{Bouchard:2013ss}. 
\begin{theorem}
    The correlation functions for a spectral curve of degree $r$ satisfy 
\begin{equation*}
    0 = \frac{1}{2\pi i} \oint_{\Gamma} K(z_0, z)
    \sum_{m=2}^{r} W_{0,1}(z)^{r-m} P_{g, n+1}^{(m)}(z, \vec{z}),
\end{equation*}
where $\Gamma$ is a contour that encloses all the branch points $\{a_1, \ldots, a_d\}$.
\end{theorem}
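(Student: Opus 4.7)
The plan is to split the integrand into stable and unstable contributions, where ``unstable'' refers to any summand in the expansion of $\cE^{(m)}W_{g,n+1}$ that involves a factor of $W_{0,1}$. Denoting the stable (respectively unstable) parts of $P^{(m)}_{g,n+1}$ by $P^{(m),\mathrm{st}}$ and $P^{(m),\mathrm{unst}}$, the contour integral decomposes as a sum of a stable and an unstable piece, and the goal is to show that they are negatives of one another.

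For the stable piece, I would invoke the local Bouchard--Eynard recursion of \cite{Bouchard:2013ss}, valid for spectral curves of degree at most three, which asserts precisely that
\[
W_{g,n+1}(z_0,\vec z) \;=\; \sum_{a_i}\Res_{z=a_i} K(z_0,z)\sum_{m=2}^{r} W_{0,1}(z)^{r-m}\, P^{(m),\mathrm{st}}_{g,n+1}(z,\vec z).
\]

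For the unstable piece, every summand contains at least one factor $W_{0,1}(w_i)$ for some $w_i\in x^{-1}(x(z))$. Because $z$ itself lies in $x^{-1}(x(z))$, the extra powers of $W_{0,1}(z)$ in the integrand (the weight $W_{0,1}(z)^{r-m}$ together with the $W_{0,1}(z)^{r-1}$ in the denominator of $K$) can be absorbed against the $W_{0,1}(w_i)$ factors by using the local antisymmetry $W_{0,1}(\sigma(z))=-W_{0,1}(z)$ of $W_{0,1}$ under the deck involution $\sigma$ near each simple branch point, with the analogous identities at a ramification point of order $3$. This reduces the unstable integrand, at each branch point, to a sum of terms of the form $\left(\int_{0}^{z} B(z_0,\zeta)\right)\cdot W_{g,n+1}(w,\vec z)$ with $w$ ranging over the sheets of $x$ above $x(z)$. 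Summing residues at all branch points and invoking the reproducing identity
\[
W_{g,n+1}(z_0,\vec z) \;=\; \sum_{a_i}\Res_{z=a_i} W_{g,n+1}(z,\vec z)\int_{0}^{z} B(z_0,\zeta),
\]
which holds because $W_{g,n+1}(\cdot,\vec z)$ is a meromorphic $1$-form with poles only at branch points and because $\int_{0}^{z} B(z_0,\zeta)$ has a Cauchy-kernel singularity in $z$ at $z=z_0$, then produces $-W_{g,n+1}(z_0,\vec z)$. Adding the stable and unstable pieces gives $0$, as claimed.

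The hard part is the combinatorial bookkeeping for $r=3$: unstable summands in $P^{(2)}$ and $P^{(3)}$ may each contain one, two, or three factors $W_{0,1}(w_i)$ distributed among the preimages $\{w_1,w_2,w_3\}$, and the algebraic reduction to the reproducing identity requires careful use of the fact that $z$ itself is one of the $w_i$, together with the vanishing order of $W_{0,1}$ and the Vieta-type relations among the $W_{0,1}(w_i)$ at each branch point.
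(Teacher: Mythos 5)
The paper does not actually prove this statement: it is quoted as background, with the proof deferred to Bouchard--Eynard \cite{Bouchard:2013ss}. So your attempt can only be judged against the argument in that reference. Your high-level picture --- the contour integral splits into a piece that produces $+W_{g,n+1}(z_0,\vec z)$ and a piece that produces $-W_{g,n+1}(z_0,\vec z)$ via a Cauchy-kernel/reproducing identity --- is the right shape, but the way you cut the integrand does not work, and the cut is where the content lies.

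First, your notion of ``unstable'' is too coarse. The only summands of $\cE^{(m)}W_{g,n+1}$ that can be fed into the reproducing identity are those in which all but one of the $m$ factors equal $W_{0,1}(w_j)$ and the remaining factor is the top correlator $W_{g,n+1}(w_i,\vec z)$ itself. For $m=2$ every term containing a $W_{0,1}$ is automatically of this form (a factor $W_{0,1}$ forces $g_i=0$, $Z_i=\emptyset$, hence the other factor is $W_{g,n+1}(\cdot,\vec z)$), but for $m=3$ it is not: terms such as $W_{0,1}(w_1)\,W_{g_2,\abs{Z_2}+1}(w_2,Z_2)\,W_{g_3,\abs{Z_3}+1}(w_3,Z_3)$ with both remaining factors nontrivial, or $W_{0,1}(w_3)\,W_{g-1,n+2}(w_1,w_2,\vec z)$ from the middle sum of $\cE^{(3)}$, contain a $W_{0,1}$ factor yet do not reduce to a multiple of $W_{g,n+1}(w,\vec z)$; they must stay on the recursion side. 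Second, the statement you invoke for the ``stable'' piece --- that summing residues of $K(z_0,z)\sum_m W_{0,1}(z)^{r-m}P^{(m),\mathrm{st}}$ with the kernel $K=\int B/(rW_{0,1}^{r-1})$ and with $P^{(m)}$ running over \emph{all} $m$-element subsets of the fiber yields $W_{g,n+1}(z_0,\vec z)$ --- is not the local Bouchard--Eynard recursion, which has denominators $\prod_j\bigl(W_{0,1}(z)-W_{0,1}(w_j)\bigr)$ and sums only over subsets containing $z$. Converting between the two is essentially the theorem itself, and it rests on the key lemma, nowhere used in your sketch, that $P^{(m)}_{g,n+1}$ is a symmetric function of the fiber, hence descends to the $x$-plane, and has no poles at the branch points; one also needs the partial-fraction identities relating $rW_{0,1}(z)^{r-1}$ to $\prod_{w\neq z}\bigl(W_{0,1}(z)-W_{0,1}(w)\bigr)$ near a ramification point. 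You flag the ``combinatorial bookkeeping'' as the hard part, which is fair, but as written the proposal assumes precisely that part.
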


\section{Master equation for open intersection numbers}
\label{sect:MasterEquation}

In this section, we reframe Alexandrov's $W$-constraint condition \eqref{eqn:W-constraint} as a \emph{master equation}, in the sense of Kazarian and Zograf \cite{KazarianMasterEqNotes,Kazarian:2014ys}. We should point out that although this approach to topological recursion was formalized by Kazarian and Zograf, the idea was already present to some degree in \cite{Eynard:2007kx}. 

The operators used to express the master equation are as follows.
\begin{align*}
    \delta^{(2)} &= \sum_{k=0}^{\infty}
    \frac{dz}{z^{2k+2}}\frac{\partial}{\partial t_{2k+1}} \\
    \delta^{(3)} &= \sum_{k=1}^{\infty}
    \frac{dz}{z^{2k+1}} \frac{\partial}{\partial t_{2k}} \\
    \delta &= \delta^{(2)} + \delta^{(3)} \\
    \tBold^{(2)} &=
    \sum_{k=0}^{\infty} (2k+1)t_{2k+1} z^{2k}dz \\
    \tBold^{(3)} &=
    \sum_{k=1}^{\infty} 2k t_{2k} z^{2k-1}dz \\
    \tBold &= \tBold^{(2)} + \tBold^{(3)}
\end{align*}
Note that given any polynomial in $t$, the operator $\delta$ produces a \emph{Laurent differential} in $z$, which is a formal expression
\begin{equation*}
    \sum_{k=-m}^{\infty} a_k(t) z^k\,dz,
\end{equation*}
for some finite $m$.
As well, we define projection operators on the space of Laurent differentials.
In particular, $\proj^{(2)}$ is the projection to
the linear span of $\{ \frac{dz}{z^{2a+2}} \}_{a=0}^{\infty}$,
while $\proj^{(3)}$ is the projection to the linear span of 
$\{ \frac{dz}{z^{2a+3}} \}_{a=0}^{\infty}$.

We then form the sum
\begin{align*}
    \cL &= \sum_{k=-1}^{\infty} u^{2k/3} \frac{dz}{z^{2k+4}}{\Lhat}_k\\
        &= \sum_{k=-1}^{\infty} u^{2k/3} \frac{dz}{z^{2k+4}}
    \left( -J_{2k+3} + (k+2)J_{2k} + L_{2k} + \frac{13}{8}\delta_{k,0} \right).
\end{align*}
Term-by-term, we find that
\begin{equation*}
    -\sum_{k=-1}^{\infty} u^{2k/3} \frac{dz}{z^{2k+4}} J_{2k+3}
    = -\sum_{k=0}^{\infty} u^{(2k-2)/3} \frac{dz}{z^{2k+2}} 
    u^{1 - (2k+1)/3} \frac{\partial}{\partial t_{2k+1}}
    = -\delta^{(2)}
\end{equation*}
\begin{align*}
    \sum_{k=-1}^{\infty} u^{2k/3} \frac{dz}{z^{2k+4}}(k+2) J_{2k}
    &= 2u^{-1} t_2 \frac{dz}{z^2} 
    + \sum_{k=1}^{\infty}(k+2) u^{2k/3} \frac{dz}{z^{2k+4}} 
    u^{1 - 2k/3} \frac{\partial}{\partial_{2k}} \\
    &= 2u^{-1} t_2 \frac{dz}{z^2} 
    + u\left(-\frac{1}{2z^2}\frac{d}{dz} + \frac{3}{2z^3}\right)
    \delta^{(3)}
\end{align*}
\begin{multline*}
    \sum_{k=-1}^{\infty} u^{2k/3} \frac{dz}{z^{2k+4}} L_{2k} = 
    u^{-2} \frac{t_1^2}{2}\frac{dz}{z^2}
    + \sum_{k=-1}^{\infty} \frac{1}{z^2 dz} \sum_{-a + b = 2k}
    \frac{dz^2}{z^{-a+b+2}}a t_a \frac{\partial}{\partial t_b} \\
    + \sum_{k=1}^{\infty} u^2 \frac{1}{2z^2\,dz} \sum_{a+b=2k}
    \frac{dz^2}{z^{a+b+2}} \frac{\partial^2}{\partial t_a \partial t_b}.
\end{multline*}
We observe that the second term of the last equality is even in $z$, with order at most $z^{-2}$.
Hence we find that
\begin{multline*}
    \cL = -\delta^{(2)} + \frac{13}{8}\frac{dz}{z^4} 
    + u^{-1} 2t_2 \frac{dz}{z^2}
    + u \left( -\frac{1}{2z^2}\frac{d}{dz} + \frac{3}{2z^3} \right)
    \delta^{(3)} \\
    + \frac{u^2}{2z^2 dz}
    \left( \bigl( \delta^{(2)} \bigr)^2 
        + \bigl(\delta^{(3)} \bigr)^2
    \right)
    + u^{-2} \frac{t_1^2 \, dz}{2z^2} \\
    + \proj^{(2)}
    \left[ \frac{1}{z^2\,dz} \bigl( \tBold^{(3)}\delta^{(3)}
            + \tBold^{(2)} \delta^{(2)} \bigr)
    \right]
\end{multline*}

In addition, we define
\begin{equation*}
    \cM = \sum_{k=-2}^{\infty} u^{2k/3 + 1} \frac{dz}{z^{2k+7}}
    {\Mhat}_k,
\end{equation*}
which, by a similar calculation as was done for $\cL$, reduces to
\begin{multline*}
    \cM = -\delta^{(3)} 
    + \frac{dz}{z^3} (2u^{-1}t_1 - 4u^{-1} t_2^2 - 6u^{-1}t_1t_3
    - 5t_4 - 2u^{-2}t_1^2 t_2 ) \\
    + \frac{dz}{z^5}
    \left(- \frac{5}{2}t_2 - u^{-1}t_1^2 \right)
    + \frac{3u\,dz}{4z^7}
    + \frac{2u}{z^3}\delta^{(2)}
    + \frac{u^2}{6z^4} 
    \left( \frac{d^2}{dz^2} - \frac{3}{z}\frac{d}{dz} - \frac{9}{2z^2}
    \right) \delta^{(3)} \\
    + \frac{u^2}{z^2 dz}
    \left( \delta^{(2)} \delta^{(3)} + \delta^{(3)} \delta^{(2)}
    \right)
    + \proj^{(3)} \left[ \frac{2}{z^2\,dz} \bigl(
            \tBold^{(2)}\delta^{(3)} + \tBold^{(3)} \delta^{(2)} \bigr)
    \right] \\
    - \frac{u^3}{z^5\,dz} \left( \bigl(\delta^{(2)} \bigr)^2
    + \bigl( \delta^{(3)} \bigr)^2 \right)
    - \proj^{(3)} \left[ \frac{2u}{z^5\,dz} \bigl(
            \tBold^{(2)} \delta^{(2)} + \tBold^{(3)} \delta^{(3)} \bigr)
    \right] \\
    - \proj^{(3)} \left[
            \frac{1}{z^4\,dz^2} \bigl(2 \tBold^{(2)} \tBold^{(3)}\delta^{(2)}
            + \tBold^{(2)} \tBold^{(2)} \delta^{(3)} 
    + \tBold^{(3)} \tBold^{(3)}\delta^{(3)} \bigr)
    \right]  \\
    - \proj^{(3)} \left[
              \frac{u^2}{z^4\,dz^2} \bigl( \tBold^{(2)} \delta^{(2)} \delta^{(3)}
            + \tBold^{(2)}\delta^{(3)}\delta^{(2)}
            + \tBold^{(3)} \delta^{(2)} \delta^{(2)}
            + \tBold^{(3)} \delta^{(3)} \delta^{(3)} \bigr)
    \right] \\
    - \frac{u^4}{3 z^4\, dz^2} \biggl( \delta^{(2)}\delta^{(2)}\delta^{(3)}
            + \delta^{(2)} \delta^{(3)} \delta^{(2)}
            + \delta^{(3)} \delta^{(2)} \delta^{(2)}
            + \delta^{(3)} \delta^{(3)} \delta^{(3)} \biggr)
\end{multline*}

If we define $\cU^{(i)} = \delta^{(i)}F$, for $i=1,2$,
then the fact that $\cL e^F = 0 = \cM e^F$ implies 
\begin{multline*}
    \cU^{(2)} = \frac{13}{8}\frac{dz}{z^4} + 2u^{-1}t_2 \frac{dz}{z^2}
    + u^{-2} t_1^2\frac{dz}{2z^2} 
    + u \left( -\frac{1}{2z^2}\frac{d}{dz} + \frac{3}{2z^3}
    \right) \cU^{(3)} \\
    + \frac{u^2}{2z^2\,dz}
    \left( \bigl( \cU^{(2)} \bigr)^2 + \bigl( \cU^{(3)} \bigr)^2
        + \delta^{(2)}\cU^{(2)} + \delta^{(3)} \cU^{(3)}
    \right) \\
    + \proj^{(2)} \left[
            \frac{1}{z^2\,dz} \bigl( \tBold^{(2)} \cU^{(2)}
            + \tBold^{(3)} \cU^{(3)} \bigr)
    \right],
\end{multline*}
and
\begin{multline*}
    \cU^{(3)} = \frac{3}{4} u \frac{dz}{z^7} 
    - \left( 5t_4 \frac{dz}{z^3} + \frac{5}{2}t_2 \frac{dz}{z^5} \right)
    + u^{-1} \left( (2t_1 - 4t_2^2 - 6t_1t_3) \frac{dz}{z^3}
    - t_1^2 \frac{dz}{z^5} \right) \\
    - 2u^{-2} t_1^2 t_2 \frac{dz}{z^3} 
    + \frac{2u}{z^3}\cU^{(2)}
    + \frac{u^2}{6z^4} \left( \frac{d^2}{dz^2} - \frac{3}{z}\frac{d}{dz}
    - \frac{9}{2z^2} \right) \cU^{(3)} \\
    + \frac{u^2}{z^2\,dz} \bigl( 2\cU^{(2)} \cU^{(3)} 
    + \delta^{(2)}\cU^{(3)} + \delta^{(3)}\cU^{(2)} \bigr) \\
    - \frac{u^3}{z^5\,dz} \left(
        \bigl(\cU^{(2)} \bigr)^2
        + \bigl( \cU^{(3)} \bigr)^2 + \delta^{(2)}\cU^{(2)}
        + \delta^{(3)} \cU^{(3)}
    \right) \\
    - \frac{u^4}{3z^4\,dz^2} \biggl[
        \bigl( \delta^{(3)^2} + \delta^{(2)^2} \bigr) \cU^{(3)}
        + \bigl( \delta^{(2)}\delta^{(3)} + \delta^{(3)}\delta^{(2)}
        \bigr) \cU^{(2)} + \bigl(\cU^{(3)}\bigr)^3 \\
        + 3\bigr(\cU^{(2)}\bigl)^2 \cU^{(3)}  
        + 3 \delta^{(2)}\bigl( \cU^{(2)}\cU^{(3)} \bigr)
        + \frac{3}{2} \delta^{(3)}\bigl( {\cU^{(2)}}^2 + {\cU^{(3)}}^2 \bigr)
    \biggr] \\
    + \proj^{(3)} \biggl[
        \frac{2}{z^2\,dz} \bigl( \tBold^{(2)} \cU^{(3)}
        + \tBold^{(3)} \cU^{(2)} \bigr) 
        - \frac{2u}{z^5\,dz} \bigl( \tBold^{(2)} \cU^{(2)}
        + \tBold^{(3)}\cU^{(3)} \bigr)  \\
        - \frac{1}{z^4\,dz^2} \Bigl(
        2\tBold^{(2)} \tBold^{(3)}\cU^{(2)}
        + \bigl(\tBold^{(2)}\bigr)^2\cU^{(3)}
        + \bigl(\tBold^{(3)}\bigr)^2 \cU^{(3)} \Bigr) \\
        - \frac{u^2}{z^4\,dz^2} \Bigl(
        \tBold^{(2)} \bigl( 2\cU^{(2)} \cU^{(3)}
        + \delta^{(2)}\cU^{(3)} + \delta^{(3)}\cU^{(2)} \bigr) \\
        + \tBold^{(3)} \bigl(
        \bigl(\cU^{(2)}\bigr)^2 + \bigl(\cU^{(3)}\bigr)^2
        + \delta^{(2)}\cU^{(2)} + \delta^{(3)}\cU^{(3)} \bigr)
        \Bigr)
    \biggr].
\end{multline*}

These equations simplify substantially if we introduce
\begin{equation*}
    \tilde{\cU} = \cU - u^{-2}z^2\,dz + u^{-2}\tBold + u^{-1} \frac{dz}{z},
\end{equation*}
with $\cU = \cU^{(2)} + \cU^{(3)}$. In fact, we have
\begin{multline*}
    \proj^{(2)} \left[ \frac{u^2}{2z^2\,dz} \bigl((\tilde{\cU} - \cU)^2 
    + u^{-1}dz (- \frac{d}{dz} + \frac{1}{z})(\tilde{\cU} - \cU) \bigr) \right]  \\
    = \frac{u^{-2} t_1^2}{2z^2}dz + \frac{2u^{-1}t_2}{z^2}dz + \frac{3dz}{2z^4}
\end{multline*}
and
\begin{multline*}
    \proj^{(3)} \left[ -\frac{u^4}{3z^4\,dz^2} \bigl(
            (\tilde{\cU} - \cU)^3 
            - u^{-2} \frac{dz^2}{2}(\frac{d^2}{dz^2} - \frac{3}{z}\frac{d}{dz} + \frac{3}{2z^2} )
            (\tilde{\cU} - \cU) 
    \bigr) \right] \\
    = \Biggl\{ \frac{3}{4}u \frac{dz}{z^7} 
    - \left( 5 t_4 \frac{dz}{z^3} + \frac{5}{2}t_2 \frac{dz}{z^5} \right)
    + u^{-1} \left( (2t_1 - 6t_1 t_3 - 4 t_2^2) \frac{dz}{z^3} - t_1^2 \frac{dz}{z^5} \right) \\
    - 2u^{-2} t_1^2 t_2 \frac{dz}{z^3}\Biggr\}.
\end{multline*}
This allows us to write
\begin{equation*}
    \proj^{(2)} \left[  \frac{u^2}{2z^2\,dz} \biggl( \tilde{\cU}^2 + \delta \cU 
            + u^{-1} dz \left(- \frac{d}{dz} + \frac{1}{z} \right) \tilde{\cU} \biggr)
    \right]
    = -\frac{dz}{8z^4}
\end{equation*}
and
\begin{equation*}
    \proj^{(3)} \left[ -\frac{u^4}{3z^4\,dz^2} \biggl(
            \tilde{\cU}^3 + 3\tilde{\cU}\delta\cU + \delta^2\cU
            - u^{-2} \frac{dz^2}{2} \left(
            \frac{d^2}{dz^2} - \frac{3}{z}\frac{d}{dz} + \frac{3}{2z^2} \right) \tilde{\cU}
    \biggr) \right] 
    = 0.
\end{equation*}

An alternative formulation of the above master equations is obtained through ``renormalization.'' 
In effect, one redefines the indeterminate form
\begin{equation*}
    \delta \tBold = \frac{dz^2}{z^2} \sum_{k=1}^{\infty} k,
\end{equation*}
and instead sets it equal to $ \frac{dz^2}{4z^2}$. Then we have
\begin{equation*}
    \proj^{(2)} \left[ \frac{1}{2\eta} \bigl(
            \tilde{\cU}^2 + \delta\tilde{\cU} + u^{-1}\cD_1\tilde{\cU} \bigr)
    \right] = 0,
\end{equation*}
and
\begin{equation*}
    \proj^{(3)} \left[ \frac{1}{3\eta^2} \bigl( 
            \tilde{\cU}^3 + \tfrac{3}{2}\delta\tilde{\cU}^2 + \delta^2\tilde{\cU} - u^{-2}\cD_2 \tilde{\cU}
    \bigr) \right] = 0,
\end{equation*}
where
\begin{align*}
    \eta &= -z^2\,dz \\
    \cD_1 &= dz \left(-\frac{d}{dz} + \frac{1}{z}\right) \\
    \cD_2 &= \frac{dz^2}{2} \left( \frac{d^2}{dz^2} - \frac{3}{z}\frac{d}{dz} + \frac{3}{z^2} \right).
\end{align*}
Although the renormalization seems arbitrary, it is well justified when one transforms these master equations into an equivalent residue equation, as is done in Section~\ref{sect:Residue}.

We can make a further reduction by defining $U = \tilde{\cU} + \delta$.
Then we have proven
\begin{theorem}

\begin{align}
    \label{eq:MasterEquationSimple1}
    \proj^{(2)} \left[ 
            \frac{1}{2\eta}(U^2 + u^{-1}\cD_1 U) \cdot 1
    \right] &= 0 \\
    \label{eq:MasterEquationSimple2}
    \proj^{(3)} \left[
            \frac{1}{3\eta^2} (U^3 - u^{-2}\cD_2 U) \cdot 1
    \right] &= 0.
\end{align}

\end{theorem}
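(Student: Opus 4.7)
The plan is to obtain the two simplified equations as direct algebraic rewrites of the renormalized master equations derived immediately above, using the substitution $U = \tilde{\cU} + \delta$. The guiding observation is that $\delta \cdot 1 = 0$, since $\delta = \delta^{(2)} + \delta^{(3)}$ is a sum of first-order $t$-differential operators with no zeroth-order piece; this collapses the noncommutative expansions of $U^n$ to a short list of surviving monomials.

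For \eqref{eq:MasterEquationSimple1}, I would expand $U^2 = \tilde{\cU}^2 + \tilde{\cU}\delta + \delta \tilde{\cU} + \delta^2$ and apply it to $1$. The two monomials ending in $\delta$ vanish, leaving $U^2 \cdot 1 = \tilde{\cU}^2 + \delta \tilde{\cU}$. Likewise $\cD_1 U \cdot 1 = \cD_1 \tilde{\cU}$, since $\cD_1$ is a purely $z$-differential operator that commutes past $\delta$ modulo a term killed by evaluating on $1$. Substitution then reproduces the integrand of the first renormalized equation already proven on the previous page.

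For \eqref{eq:MasterEquationSimple2}, the same strategy applies but requires slightly more care. I would expand $U^3$ into eight noncommutative monomials, discard the five that end in $\delta$, and evaluate the three survivors using the commutation relation $\delta \circ \tilde{\cU} = (\delta \tilde{\cU}) + \tilde{\cU}\,\delta$ (that is, $\delta$ acts as a derivation on the $t$-dependent coefficients of $\tilde{\cU}$). This yields $U^3 \cdot 1 = \tilde{\cU}^3 + \tilde{\cU}(\delta \tilde{\cU}) + \delta(\tilde{\cU}^2) + \delta^2 \tilde{\cU}$. Applying the Leibniz identity $\delta(\tilde{\cU}^2) = 2\tilde{\cU}(\delta \tilde{\cU})$, this rewrites as $\tilde{\cU}^3 + \tfrac{3}{2}\delta(\tilde{\cU}^2) + \delta^2 \tilde{\cU}$, which together with $\cD_2 U \cdot 1 = \cD_2 \tilde{\cU}$ matches the integrand of the second renormalized equation term-by-term.

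There is no conceptual obstacle here; the theorem is a compact repackaging of identities already in hand, designed to streamline the transition to the residue formulation of Section~\ref{sect:Residue}. The only point warranting care is the combinatorial bookkeeping of which monomials in the $U^3$ expansion survive on $1$ and the correct use of Leibniz for $\delta$ acting on $\tilde{\cU}^2$; once these two ingredients are settled, both simplified equations collapse onto identities already established.
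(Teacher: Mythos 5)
Your argument is correct and is essentially the paper's own (implicit) proof: substitute $U = \tilde{\cU}+\delta$, use $\delta\cdot 1 = 0$ to discard the monomials ending in $\delta$, apply Leibniz to convert $\tilde{\cU}(\delta\tilde{\cU})+\delta(\tilde{\cU}^2)$ into $\tfrac{3}{2}\delta(\tilde{\cU}^2)$, and match the result against the renormalized master equations derived just above. One trivial bookkeeping slip: of the eight monomials in $U^3$, four (not five) end in $\delta$ and four (not three) survive, but the four-term expression you actually display is the correct one.
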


\begin{remark}
    Kazarian \cite{KazarianMasterEqNotes} has shown that the master equation for the Witten-Kontsevich $\tau$ function can be written in the form
    \begin{equation*}
        \proj^{(2)} \left[
                \frac{1}{2\eta}U^2 \cdot 1
        \right] = 0,
    \end{equation*}
    with identical initial conditions as the open intersection case, except for the lack of the term $u^{-1}\frac{dz}{z}$, and with the same renormalization condition on $\delta\tBold$. 
\end{remark}

\section{Residue formulation}
\label{sect:Residue}

In order to reformulate the master equations \eqref{eq:MasterEquationSimple1}, \eqref{eq:MasterEquationSimple2} as an example of Eynard-Orantin topological recursion \cite{Eynard:2007kx},
we must decompose the equation by genus and marked points,
convert the projection operators into residue integrals and finally,
write the equations using symmetric correlation differentials.

\begin{lemma}
    Given a Laurent differential $\gamma(z)\,dz$, we have
    \begin{align*}
        \proj^{(2)}\left[ \gamma(z)\,dz \right]
        &= \Res_{w\rightarrow 0} 
        \left[
                \frac{1}{2}\left( \frac{1}{z-w} - \frac{1}{z+w} \right)
        \gamma(w)\,dw \right] \\
        \proj^{(3)}\left[ \gamma(z)\, dz \right]
        &= \Res_{w\rightarrow 0}
        \left[
                \frac{1}{2} 
                \left(\frac{1}{z-w} - \frac{1}{z} + \frac{1}{z+w} - \frac{1}{z} \right) \gamma(w)\,dw
        \right]
    \end{align*}
    
\end{lemma}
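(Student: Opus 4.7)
The plan is a direct verification, exploiting linearity to reduce both identities to a check on monomial Laurent differentials $\gamma(z) = z^m$ for $m \in \ZZ$. Both projection operators $\proj^{(2)}, \proj^{(3)}$ and the residue operation on the right are $\CC$-linear in $\gamma$, and because a Laurent differential by definition has only finitely many negative powers, all formal series manipulations below are legitimate. So it suffices to check that for each $m$, the residue picks out exactly the monomials that the projection retains.

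For the first identity, I would rearrange the stated kernel as a power series in $w$ around $w=0$:
\begin{equation*}
    \frac{1}{2}\left(\frac{1}{z-w} - \frac{1}{z+w}\right) = \frac{w}{z^2 - w^2} = \sum_{j=0}^{\infty} \frac{w^{2j+1}}{z^{2j+2}}.
\end{equation*}
Multiplying by $\gamma(w)\,dw = w^m\,dw$ and taking the residue at $w=0$ picks out the coefficient of $w^{-1}$, which occurs precisely when $2j+1+m=-1$, i.e.\ when $m = -2j-2$ for some $j\geq 0$. The residue then equals $z^{-2j-2} = z^m$ for even $m \leq -2$ and vanishes otherwise. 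This matches $\proj^{(2)}[z^m\,dz]$, which by definition equals $z^m\,dz$ on exactly these monomials and zero elsewhere.

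For the second identity, the analogous computation uses the symmetric combination
\begin{equation*}
    \frac{1}{2}\left(\frac{1}{z-w} + \frac{1}{z+w}\right) - \frac{1}{z} = \frac{z}{z^2-w^2} - \frac{1}{z} = \frac{w^2}{z(z^2-w^2)} = \sum_{j=0}^{\infty} \frac{w^{2j+2}}{z^{2j+3}}.
\end{equation*}
Now the residue against $w^m\,dw$ picks out the term with $2j+2+m=-1$, yielding $z^{-2j-3} = z^m$ exactly when $m$ is odd and $\leq -3$, and zero otherwise. This again matches $\proj^{(3)}[z^m\,dz]$ monomial by monomial.

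The only conceptual point worth flagging is the role of the two copies of $-\tfrac{1}{z}$ in the second kernel: without them, the combination $\tfrac{1}{2}(\tfrac{1}{z-w}+\tfrac{1}{z+w}) = \tfrac{z}{z^2-w^2}$ would contribute a leading $\tfrac{1}{z}$ term at order $w^0$, which would spuriously produce a $\tfrac{1}{z}$ contribution whenever $\gamma(w)$ has a simple pole at $w=0$. Subtracting $\tfrac{1}{z}$ twice (once from each summand, to preserve the symmetric split) precisely cancels this unwanted term and shifts the expansion to start at order $w^2$, producing exactly the allowed odd negative powers $z^{-3}, z^{-5}, \ldots$. Apart from that observation, both formulas are a routine residue calculation, so I do not expect any genuine obstacle.
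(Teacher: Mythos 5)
Your proof is correct and takes essentially the same route as the paper, whose entire proof is the remark that the lemma follows from ``a direct calculation of the residues against the Laurent differential $z^k\,dz$, for any integer $k$'' --- precisely the monomial-by-monomial check you carry out. Your kernel expansions $\frac{1}{2}\bigl(\frac{1}{z-w}-\frac{1}{z+w}\bigr)=\sum_{j\ge 0}w^{2j+1}/z^{2j+2}$ and $\frac{1}{2}\bigl(\frac{1}{z-w}+\frac{1}{z+w}\bigr)-\frac{1}{z}=\sum_{j\ge 0}w^{2j+2}/z^{2j+3}$ are accurate, the surviving exponents match the definitions of $\proj^{(2)}$ and $\proj^{(3)}$, and your observation about why the $-\frac{1}{z}$ subtractions are needed is exactly the right point to flag.
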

\begin{proof}
    This is an easy consequence of the definition of a Laurent differential, and from a direct calculation of the residues against the Laurent differential $z^k\,dz$, for any integer $k$. 
\end{proof}

\begin{remark}
    The integral operators used to replace the projection operators can be expressed as
    \begin{equation*}
        p^{(2)}(z,w) =   \frac{1}{2}\int_{\zeta=0}^{w} B(z,\zeta) 
        - \frac{1}{2}\int_{\zeta=0}^{-w} B(z, \zeta),
    \end{equation*}
    and
    \begin{equation*}
        p^{(3)}(z, w) =   \frac{1}{2}\int_{\zeta=0}^{w} B(z, \zeta)
        + \frac{1}{2}\int_{\zeta=0}^{-w} B(z, \zeta),
    \end{equation*}
    respectively, where $B(z_1,z_2)$ is the normalized canonical bilinear differential of the second kind defined on $\PP^1$ (c.f. \cite{MR0335789}),
    and $w \mapsto -w$ is the unique involutive mapping preserving the spectral curve
    $x = \frac{1}{2}w^2$ around the branch point at $w=0$. In other words, our construction is quite general, and not necessarily restricted to the example at hand.
\end{remark}

To decompose the partition function by genus and marked points, we set
\begin{equation*}
    \tau_1 = e^F,
\end{equation*}
and 
\begin{align*}
    F(u, t) &= \sum_{h=0}^{\infty} u^{h-2} F_{h/2}(t) \\
    F_{g}(t) &= \sum_{n=1}^{\infty} F_{g,n}(t),
\end{align*}
where $F_{g,n}(t)$ is homogeneous of degree $n$ in the variables $t_1, t_2, \ldots$. Then we set
\begin{align*}
    \cU_{0,1} &= -z^2\,dz \\
    \cU_{0,2} &= \tBold \\
    \cU_{\frac{1}{2}, 1} &= \frac{dz}{z} \\
    \cU_{g, n+1} &= \delta F_{g,n+1} \quad
    \text{if $2g-2+n+1 > 0$ and $n, 2g \in \ZZ_{\geq 0}$}.
\end{align*}
After decomposing master equations \eqref{eq:MasterEquationSimple1}, \eqref{eq:MasterEquationSimple2} by degree in both $u$ and $t$,
and replacing the projection operators $\proj^{(i)}$ with the appropriate residue integral,
we arrive at the following recursion relation, valid for all $g, n \geq0$ with $2g-1+n > 0$ and $2g, n \in \ZZ$:
\begin{multline}
    \label{eqn:TopologicalRecursionU}
    \cU_{g,n+1}(z) = 
    \Res_{w\rightarrow 0} \Biggl\{
        K^{(2)}(z, w) \biggl[
                \delta U_{g-1, n+2}(w, w) \\
                + \sum_{\substack{ g_1 + g_2 = g \\ n_1+n_2 = n+2 }}^{\text{no $(g,n+1)$ term}}
                \cU_{g_1,n_1}(w) \cU_{g_2, n_2}(w)
                + \cD_1 \cU_{g-\frac{1}{2}, n+1}(w)
        \biggr] \\
        + K^{(3)}(z, w) \biggl[
                \delta^2\cU_{g-2, n+3}(w, w, w) 
                + \frac{3}{2}\delta \sum_{\substack{g_1 + g_2 = g-1 \\ n_1 + n_2 = n+3}}
                \cU_{g_1, n_1}(w) \cU_{g_2, n_2}(w) \\
                + \sum_{\substack{g_1 + g_2 + g_3 = g \\ n_1 + n_2 + n_3 = n+3 }}^{\text{no $(g,n+1)$ terms}}
                \prod_{i=1}^{3} \cU_{g_i, n_i}(w)
                - \cD_2 \cU_{g-1, n+1}(w)
        \biggr]
    \Biggr\} ,
\end{multline}
where
\begin{equation*}
    K^{(j)}(z, w) = \left(
        (-1)^j \int_{\zeta=0}^{-w} B(z, \zeta) 
        - \int_{\zeta=0}^{w} B(z, \zeta)
    \right)
    \frac{1}{2j (-w^2\,dw)^{j-1}},
\end{equation*}
and
\begin{equation*}
    B(z_1, z_2) = \frac{dz_1 dz_2}{(z_1 - z_2)^2}.
\end{equation*}

Note that this formula is only valid under the convention that
$\delta\cU_{0,2} =  \frac{dz^2}{4z^2}$.

We now construct the correlation functions appearing in topological recursion by setting
\begin{equation*}
    \delta_i = \sum_{j=1}^{\infty} \frac{dz_i}{z_i^{j+1}} \frac{\partial}{\partial t_j}
\end{equation*}
and defining
\begin{equation*}
    W_{g, n}(z_1, \ldots, z_n) = \delta_1 \cdots \delta_n F_{g,n}(t),\quad \text{if $2g-2+n>0$.}
\end{equation*}
The unstable correlation functions are defined as $W_{0,1}(z) = -z^2\,dz$, $W_{\frac{1}{2},1}(z) = \frac{dz}{z}$, and
\begin{align*}
    W_{0,2}(z_1, z_2) &= \delta_2 \cU_{0,2}(z_1) \\
                      &= \sum_{k=1}^{\infty} k \frac{z_1^{k-1}}{z_2^{k+1}} dz_1 dz_2 \\
                      &= \frac{dz_1 dz_2}{(z_1 - z_2)^2}.
\end{align*}
We also need the following renormalized correlation functions:
\begin{equation*}
    \widetilde{W}_{g,n}(z_1, \ldots, z_n) = W_{g,n}(z_1, \ldots, z_n)
    - \delta_{g,0} \delta_{n,2} \frac{dx(z_1) dx(z_2)}{(x(z_1) - x(z_2))^2},
\end{equation*}
where $x = z^2/2$, and
$\delta_{n,m}$ is the Dirac delta function, not the operator $\delta_i$ appearing elsewhere in the paper.
We note in particular, that 
\begin{equation*}
        \widetilde{W}_{0,2}(z, z) = \frac{dz^2}{4z^2},
\end{equation*}
which is exactly the renormalized behavior we need for the topological recursion formula.

To make the formulas a little more digestible, we use the notation $\vec{z} = (z_1, \ldots, z_n)$, 
\begin{multline*}
    \cR^{(2)}W_{g, n+1}(w; \vec{z}) = 
    \widetilde{W}_{g-1, n+2}(w, w, \vec{z}) \\
    + \sum_{\substack{g_1 + g_2 = g \\ Z_1 \sqcup Z_2 = \vec{z}}}^{\text{no $(g,n+1)$ terms}}
    W_{g_1, \abs{Z_1}+1}(w, Z_1) W_{g_2, \abs{Z_2}+1}(w, Z_2),
\end{multline*}
and
\begin{multline*}
    \cR^{(3)}W_{g, n+1}(w; \vec{z}) =
    W_{g-2, n+3}(w, w, w, \vec{z}) \\
    +3 \sum_{\substack{g_1 + g_2 = g-1 \\ Z_1 \sqcup Z_2 = \vec{z}}} W_{g_1, \abs{Z_1}+1}(w, Z_1)
    \widetilde{W}_{g_2, \abs{Z_2}+2}(w, w, Z_2) \\
    + \sum_{\substack{g_1 + g_2 + g_3 = g \\ Z_1 \sqcup Z_2 \sqcup Z_3 = \vec{z}}}^{\text{no $(g, n+1)$ terms}}
    \prod_{i=1}^{3} W_{g_i, \abs{Z_i}+1}(w, Z_i)
\end{multline*}

If we apply the operator $\delta_1 \cdots \delta_n$ to \eqref{eqn:TopologicalRecursionU}, then we find the following
\begin{theorem}
    The correlation functions $W_{g,n}$ for open intersection numbers obey the topological recursion
    formula
    \begin{multline}
        \label{eqn:TopologicalRecursion}
        W_{g,n+1}(z_0, \ldots, z_n) = \Res_{w\rightarrow 0} \Biggl\{
            K^{(2)}(z_0, w) \biggl[
                \cR^{(2)}W_{g, n+1}(w; \vec{z}) + \cD_1 W_{g-1/2, n+1}(w, \vec{z})
            \biggr] \\
            + K^{(3)}(z_0, w) \biggl[
                \cR^{(3)}W_{g, n+1}(w; \vec{z}) - \cD_2 W_{g-1, n+1}(w, \vec{z})
            \biggr]
    \Biggr\},
    \end{multline}
   with initial conditions
   \begin{align*}
       W_{0,1}(z) &= -z^2\,dz \\
       W_{\frac{1}{2},1}(z) &= \frac{dz}{z} \\
       W_{0,2}(z_1, z_2) &= B(z_1, z_2),
   \end{align*}
   where
   \begin{align*}
       \cD_1 &= dz \left(-\frac{d}{dz} + \frac{1}{z}\right) \\
       \cD_2 &= \frac{dz^2}{2} \left( \frac{d^2}{dz^2} - \frac{3}{z}\frac{d}{dz} + \frac{3}{z^2} \right).
   \end{align*}
   
\end{theorem}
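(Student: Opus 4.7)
The strategy is to apply the multi-derivative operator $\delta_1 \cdots \delta_n$ to the scalar recursion \eqref{eqn:TopologicalRecursionU} and to verify that each term produces exactly the corresponding term in the symmetric-differential formulation \eqref{eqn:TopologicalRecursion}. Since the kernels $K^{(j)}(z_0, w)$ and the linear differential operators $\cD_1, \cD_2$ depend only on $z_0$ and $w$, they commute past every $\delta_i$, so one may push $\delta_1 \cdots \delta_n$ inside both the residue in $w$ and these operators without penalty.

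On the left-hand side, $\delta_1 \cdots \delta_n\, \cU_{g, n+1}(z_0) = W_{g, n+1}(z_0, z_1, \ldots, z_n)$ directly from the definition of the correlation functions. For the quadratic sum $\sum \cU_{g_1, n_1}(w)\, \cU_{g_2, n_2}(w)$, the Leibniz rule produces $\sum_{g_1+g_2=g}\sum_{Z_1 \sqcup Z_2 = \vec{z}} W_{g_1, |Z_1|+1}(w, Z_1)\, W_{g_2, |Z_2|+1}(w, Z_2)$ (still excluding the $(g, n+1)$ term), where one checks directly from $\tBold = \sum_{j \geq 1} j\, t_j\, z^{j-1}\, dz$ that $\delta_i \cU_{0,2}(w) = \delta_i \tBold(w) = dz_i\, dw/(z_i - w)^2 = W_{0,2}(z_i, w)$, so no renormalization is needed in product terms coming from $\tBold$. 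The term $\cD_1 \cU_{g-1/2, n+1}(w)$ becomes $\cD_1 W_{g-1/2, n+1}(w, \vec{z})$ at once.

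The coincident-point piece $\delta \cU_{g-1, n+2}(w, w)$ is the delicate part: interpreting the first $\delta$ as acting in an auxiliary variable $w'$ evaluated at $w' = w$, it should yield $\widetilde{W}_{g-1, n+2}(w, w, \vec{z})$ after further differentiation by $\delta_1 \cdots \delta_n$. For stable $(g-1, n+2) \neq (0, 2)$ this follows directly from the definition of $W_{g-1, n+2}$; for $(g-1, n+2) = (0, 2)$ the naive expression $\delta \tBold(w')|_{w' = w} = B(w, w')|_{w' = w}$ is divergent, and the renormalization convention $\delta \cU_{0,2}(w, w) = dz^2/(4z^2)$ introduced just before the recursion is precisely what produces $\widetilde{W}_{0,2}(w, w) = dz^2/(4z^2)$. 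An entirely parallel analysis handles the cubic bracket: $\delta^2 \cU_{g-2, n+3}(w, w, w)$ becomes $W_{g-2, n+3}(w, w, w, \vec{z})$ (the unstable case does not arise here for indicial reasons), and the prefactor $\frac{3}{2}\delta$ on $\sum \cU_{g_1, n_1}\, \cU_{g_2, n_2}$, combined with the symmetry factor of $2$ generated when the auxiliary $\delta$ is distributed by Leibniz across the two factors, collapses to the prefactor $3$ appearing in front of $\widetilde{W}_{g_2, |Z_2|+2}(w, w, Z_2)\, W_{g_1, |Z_1|+1}(w, Z_1)$ inside $\cR^{(3)} W_{g, n+1}$.

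The main obstacle is the bookkeeping around the renormalized object $\widetilde{W}_{0,2}$: each place where $\cU_{0,2} = \tBold$ might appear must be checked to confirm that the right-hand side yields $\widetilde{W}_{0,2}(w, w)$ when the two arguments come from coalescing a single auxiliary variable with $w$, and $W_{0,2}(z_i, w)$ whenever $\cU_{0,2}$ is hit by an external $\delta_i$. With the renormalization built into the master equations of Section~\ref{sect:MasterEquation}, this bookkeeping is purely mechanical and reproduces the structures $\cR^{(2)} W_{g, n+1}$ and $\cR^{(3)} W_{g, n+1}$ term by term.
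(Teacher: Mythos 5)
Your proposal is correct and follows the paper's own route: the paper obtains the theorem precisely by applying $\delta_1 \cdots \delta_n$ to \eqref{eqn:TopologicalRecursionU}, commuting these operators past the kernels and $\cD_i$, and matching terms via Leibniz, with the renormalization convention $\delta\cU_{0,2}(w,w) = \frac{dw^2}{4w^2} = \widetilde{W}_{0,2}(w,w)$ handling the coincident-point case. Your bookkeeping of the factor $\tfrac{3}{2}\cdot 2 = 3$ and of where $\widetilde{W}_{0,2}$ versus $W_{0,2}$ appears supplies exactly the detail the paper leaves implicit.
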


\section{A conjectural refinement}
\label{sect:Q-Grading}

One issue that arises in the above formulation of topological recursion for open intersection numbers is that the correlators combine intersection numbers from a union of disjoint moduli spaces. In particular, a given function $F_{h, n+1}$, for $2h, n \in \ZZ_{\geq 0}$, has contributions from moduli spaces of genus $g$ curves with $b$ boundary components for all $g, b$ with $g + b/2 = h$.
If the contribution to $F_{h,n}$ from genus $g$ surfaces with $b$ boundary components is $F_{g,b,n}$, it is natural to try to introduce a parameter $Q$, and write
\begin{equation*}
    F_{h,n}(t, Q) = \sum_{\substack{g \in \ZZ \\ 0 \leq g \leq h}} Q^{2(h-g)} F_{g, 2(h-g), n}(t),
\end{equation*}
as well as the associated correlators
\begin{equation*}
    W_{g, k, n} = \delta_1 \cdots \delta_n F_{g, k, n}.
\end{equation*}
If such a decomposition is possible, 
we can resum to obtain correlators with only integral genus parameter
\begin{equation*}
    \Omega_{g,n} = \sum_{k=0}^{\infty} Q^k W_{g, k,n}.
\end{equation*}
This form of the correlators is required if one were to attempt to find a spectral curve whose correlators under the standard theory of topological recursion calculate open intersection numbers.

It is tempting to try to insert the $Q$ grading into the topological recursion formula \eqref{eqn:TopologicalRecursion} by replacing $W_{\frac{1}{2},1} \mapsto Q W_{\frac{1}{2},1}$ and $\cD_i \mapsto Q^i \cD_i$, as this would produce correlation functions with terms of correct degree in $Q$.
Unfortunately, this proves unsuccessful as these $Q$-graded correlators
$W_{g,n}(Q; \vec{z})$ with $2g-2 + n \geq 3$ are not symmetric.
However, correlators with $2g-2 + n < 3$ match exactly to the correlators coming from $\tau_Q$, the Kontsevich-Penner matrix model.
This motivates the following
\begin{conjecture}
    \label{conj:QGrading}
    Let $\tau_Q = e^{F_Q}$, with $\tau_Q$ given by the Kontsevich-Penner matrix model \eqref{eqn:MatrixIntegral}. Then the correlators $W_{g,n}(Q; \vec{z}) = \delta_1 \cdots \delta_n F_{g,n}(Q; t)$ given from the expansion
    \begin{equation*}
        F_Q(t) = \sum_{g,n} u^{2g-2}F_{g,n}(Q;t)
    \end{equation*}
   are the $Q$-graded correlators for open intersection numbers. 
\end{conjecture}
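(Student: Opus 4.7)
The plan is to prove Conjecture~\ref{conj:QGrading} in two stages: first, derive a $Q$-dependent topological recursion for $\tau_Q$ by carrying the parameter $Q$ through the chain of $W$-constraint, master equation, and residue manipulations of Sections~\ref{sect:W-Constraint}--\ref{sect:Residue}; second, identify the resulting $Q$-graded correlators with boundary-component-graded open intersection numbers.

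For the first stage, I would re-derive Alexandrov's constraint operators from the Ward identities of the matrix integral \eqref{eqn:MatrixIntegral} keeping $Q$ general. The factor $\det(\Lambda)^Q$ together with the $Q \log \Phi$ term in the action contributes linear-in-$Q$ corrections to the Virasoro generators and both linear and quadratic (via $Q(Q-1)$ arising from double derivatives of the logarithm inside a cubic Ward identity) corrections to the $W^{(3)}$ generators, producing operators $\openL_k(Q)$, $\openM_k(Q)$ that annihilate $\tau_Q$, reduce to \eqref{eqn:LkOpenHat}--\eqref{eqn:MkOpenHat} at $Q=1$, and to the standard Witten--Kontsevich Virasoro at $Q=0$. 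The algebraic manipulations of Sections~\ref{sect:MasterEquation}--\ref{sect:Residue} then go through with only cosmetic modifications, yielding a $Q$-dependent master equation and residue recursion for the correlators $W_{g,n}(Q;\vec z)$. Careful bookkeeping of $Q$-degrees through the recursion should establish that each correlator is polynomial in $Q$ of degree at most $2g$, so that the expansion $W_{g,n}(Q;\vec z) = \sum_k Q^k W_{g,k,n}(\vec z)$ is well-defined; symmetry in $\vec z$ is automatic from the construction, thereby resolving the symmetry failure encountered under the naive ansatz $W_{\frac{1}{2},1}\mapsto Q\, W_{\frac{1}{2},1}$, $\cD_i \mapsto Q^i \cD_i$.

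For the second stage, one must identify $W_{g,k,n}$ with the generating function of intersection numbers on the moduli space of genus-$g$ surfaces with exactly $k$ boundary components. Two endpoints are essentially known: $Q=0$ specializes to $\tau_{\text{WK}}$, matching the $k=0$ component, while $Q=1$ specializes (via Alexandrov's conjecture, essentially established in \cite{Buryak:2015uq, Tessler:2015ys}) to $\tau_\fo$, the sum over all $k$ at fixed $\halfgenus$. To pin down intermediate $Q$, the natural route is to differentiate: the matrix-model insertion $\partial_Q \log \tau_Q$ corresponds to $-\Tr \log \Phi + \Tr \log \Lambda$, which after Laplace transform becomes an explicit local operator that should act as a boundary-creation operator. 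Comparing this with a geometric boundary-insertion operator built from the Solomon--Tessler boundary structure would iteratively recover the full $Q$-expansion term by term.

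The main obstacle is unambiguously the geometric side. At present there is no rigorous definition of open intersection numbers graded by the exact number of boundary components (only by augmented genus $\halfgenus = g + b/2$), so the target of the identification is not yet fully defined as a geometric object. The proof must therefore either await a refinement of the Solomon--Tessler construction that respects the boundary-component grading, or adopt the matrix-model expansion as the \emph{definition} of the refined intersection numbers and then verify geometric consistency wherever both sides are meaningful (the disk, the annulus, low-dimensional strata, and specializations of Buryak's Burgers--KdV hierarchy). A further subtlety is that the $Q$-deformed recursion produced in stage one does not appear to reduce to Eynard--Orantin topological recursion for any classical spectral curve, so one cannot expect to import an independent geometric characterization of the kind available at $Q=0$, and the match at intermediate $Q$ must be extracted directly from the recursion itself.
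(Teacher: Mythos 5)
This statement is a \emph{conjecture} in the paper: the author offers no proof, only partial evidence (verification for $g=0,\tfrac12,1$, correct $Q$-degree behaviour, and hand-checked low-genus ribbon-graph computations), together with an explicit warning about the obstruction coming from common boundary strata of moduli spaces with different numbers of boundary components. Your proposal is therefore being measured against nothing in the paper; it has to stand on its own, and as written it is a research program rather than a proof. Two gaps are decisive. First, in your stage one you assert that symmetry of the $Q$-graded correlators is ``automatic from the construction.'' That is precisely the point at which the paper's own attempt fails: inserting $Q$ via $W_{\frac12,1}\mapsto Q\,W_{\frac12,1}$ and $\cD_i\mapsto Q^i\cD_i$ in \eqref{eqn:TopologicalRecursion} produces non-symmetric $W_{g,n}$ for $2g-2+n\geq 3$. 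A genuinely $Q$-dependent family of constraints derived from the Ward identities of \eqref{eqn:MatrixIntegral} would indeed annihilate $\tau_Q$ and hence determine symmetric correlators of $F_Q$, but you have not shown that the resulting recursion carries the grading you need, i.e.\ that the coefficient of $Q^k$ in $W_{g,n}(Q;\vec z)$ is the $k$-boundary contribution rather than some other polynomial combination; polynomiality in $Q$ alone does not identify the coefficients.

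Second, and more fundamentally, your stage two cannot currently be completed because its target does not exist as a rigorous object: there is no construction of intersection numbers on compactified moduli spaces graded by the exact number of boundary components, and the paper points out that the Solomon--Tessler compactification mixes components with different $b$ along common boundary strata, with nodal ribbon graphs contributing nontrivially. Your proposed boundary-creation operator $\partial_Q\log\tau_Q$ is a sensible heuristic, but matching it to a geometric boundary-insertion operator presupposes the very refinement whose existence is in question. Your fallback --- taking the matrix-model expansion as the \emph{definition} of the refined numbers --- makes the conjecture true by fiat and empties it of content. So the proposal correctly diagnoses the obstruction but does not overcome it; the conjecture remains open, and your outline should be presented as a strategy, not a proof.
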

The conjecture is true for $g = 0, \frac{1}{2}, 1$. In addition, the correlators of all augmented genera for $F_Q$ 
exhibit the proper degree behaviour in $Q$.
One difficulty that arises is that moduli spaces with different numbers of boundary components can have common boundary using the compactification of Solomon and Tessler.
Moreover, the boundary contributes non-trivially to the intersection numbers, realized in the form of nodal ribbon graphs contributing to the total in Tessler's combinatorial model \cite{Tessler:2015ys}.

However, what seems to hold true, at least in the low genus examples that can be calculated by hand, is that the ribbon graphs naturally sort themselves into neat piles, each contributing to a term with fixed degree in $Q$, and the resulting expression matching the one predicted by the conjecture. So, while the evidence in support of Conjecture~\ref{conj:QGrading} is hardly definitive, it certainly seems promising enough to warrant further investigation.

\section{Quantum curve equation}
\label{sect:QuantumCurve}

In this section we derive a quantum curve for open intersection numbers.
In topological recursion, the quantum curve is obtained from the spectral curve via \emph{quantization}, whereby we replace $y$ with $\hbar \frac{d}{dx}$. Then, in many cases, and with the proper choice of ordering of the now non-commuting variables, one obtains the quantum curve equation
\begin{equation*}
    P(\hat{x}, \hat{y})e^{\Psi} = 0,
\end{equation*}
where $\Psi$ is the \emph{principal specialization} of the partition function. 
In our case, it can be realized by substituting 
\begin{align*}
    \tilde\Psi_Q &= 
    F\biggr|_{\substack{u \mapsto \hbar \\
    t_k \mapsto \frac{\hbar}{k z^k}}} \\
    \Psi_Q &= \tilde\Psi_Q + \frac{z^3}{3} - \frac{3}{4}\log \frac{z^2}{2}.
\end{align*}
It also corresponds with taking $N=1$ in the matrix integral \eqref{eqn:MatrixIntegral}.
Hence, by work of Brezin and Hikami \cite{MR2874239},
we have the quantum curve equation
\begin{theorem}
    If $\Psi_Q$ is the principal specialization of $F_Q$ and
    $x = \frac{1}{2}z^2$, then
\begin{equation*}
    \left(\hbar^3 \frac{d^3}{dx^3} - 2\hbar x \frac{d}{dx}
    + 2\hbar(Q-1)\right) e^{\Psi_Q} = 0.
\end{equation*}
The semi-classical limit is, for all values of $Q$,
\begin{equation*}
    y^3 - 2xy = 0.
\end{equation*}
\end{theorem}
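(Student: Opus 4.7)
The strategy is to reduce the problem to the $N=1$ case of the Kontsevich--Penner matrix integral \eqref{eqn:MatrixIntegral}, where the partition function becomes an ordinary one-dimensional integral that can be shown to satisfy the stated ODE directly by integration by parts, following Brezin--Hikami.

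First, I would verify that the principal specialization $t_k \mapsto \hbar/(kz^k)$ corresponds, via Miwa parametrization, to the $N=1$ specialization $\Lambda = z$ of $\tau_Q$ (with the appropriate $\hbar$ rescaling in the exponent). Under this substitution the matrix integral becomes
\begin{equation*}
    \tau_Q\bigr|_{N=1} = z^Q \, \cC_1^{-1} \int d\phi \, \phi^{-Q} \exp\!\left(-\tfrac{1}{\hbar}\bigl(\tfrac{\phi^3}{6} - x\phi \bigr) \right),
\end{equation*}
with $x = z^2/2$ and $\cC_1$ the $N=1$ normalization. The prefactor $z^Q \cC_1^{-1}$ combined with the shift $\tfrac{z^3}{3} - \tfrac{3}{4}\log\tfrac{z^2}{2}$ appearing in the definition of $\Psi_Q$ is designed precisely to render $e^{\Psi_Q}$ equal (up to an inessential constant) to the bare integral $I(x) = \int d\phi\, \phi^{-Q} e^{-(\phi^3/6 - x\phi)/\hbar}$, taken along an appropriate contour chosen so that boundary terms in integration by parts vanish.

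Second, I would derive the ODE for $I(x)$ by a one-line integration by parts. Since $\hbar \partial_x$ brings down a factor of $\phi$ inside the integral, one has $\hbar^k I^{(k)}(x) = \int d\phi \, \phi^{k-Q} e^{-(\phi^3/6 - x\phi)/\hbar}$. Setting the integral of the total derivative $\frac{d}{d\phi}\bigl[\phi^{1-Q} e^{-(\phi^3/6 - x\phi)/\hbar}\bigr]$ to zero gives
\begin{equation*}
    (1-Q)\, I - \tfrac{\hbar^2}{2}\, I''' + x\, I' = 0,
\end{equation*}
which, after multiplication by $2\hbar$, is exactly $\bigl(\hbar^3 \tfrac{d^3}{dx^3} - 2\hbar x \tfrac{d}{dx} + 2\hbar(Q-1)\bigr) I = 0$. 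Transferring this to $e^{\Psi_Q}$ then reduces to the algebraic check that the prefactors $z^Q \cC_1^{-1}$ and the $\Psi_Q$-shift do not alter the form of the operator, which is a short computation using $dx = z\,dz$.

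For the semiclassical limit, I would use a standard WKB ansatz $e^{\Psi_Q} = \exp\!\bigl(\hbar^{-1} S_0 + S_1 + \cdots\bigr)$ with $S_0'(x) = y$ (matching $W_{0,1} = y\,dx$ for the spectral curve). Substituting into the quantum curve and collecting the leading-order terms in $\hbar$ yields $y^3 - 2xy = 0$; the term $2\hbar(Q-1)$ contributes only at subleading order, so the classical limit is independent of $Q$, as claimed. The main obstacle is really the bookkeeping of step one, namely verifying that the $N=1$ matrix integral, including all normalizing factors, reproduces exactly $e^{\Psi_Q}$ rather than some cousin; the ODE derivation and semiclassical limit are then essentially automatic.
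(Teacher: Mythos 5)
Your proposal follows essentially the same route as the paper, which simply observes that the principal specialization corresponds to the $N=1$ case of the Kontsevich--Penner integral \eqref{eqn:MatrixIntegral} and cites Brezin--Hikami for the resulting ODE; you have merely filled in the integration-by-parts derivation and the WKB check that the paper delegates to that reference. The argument is correct (modulo the prefactor bookkeeping you yourself flag, which the paper also leaves implicit).
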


We note in particular that the spectral curve is reducible and has degree 3.
Since there is currently no known way of calculating topological recursion for a general reducible spectral curve, it is not surprising that the formulas we obtain are not in exact correspondence with the standard topological recursion.
Whether or not this example can be generalized to other reducible curves is a topic for future work.
However, the fact that the curve is degree 3 does help explain why the formula we obtain most closely matches topological recursion in the rank 3 case.


\bibliographystyle{plain}
\bibliography{references}

\begin{thebibliography}{10}

\bibitem{Alexandrov:2015kq}
A.~{Alexandrov}.
\newblock {Open intersection numbers, matrix models and MKP hierarchy}.
\newblock {\em Journal of High Energy Physics}, 3:42, 2015.

\bibitem{Alexandrov:2014gfa}
Alexander Alexandrov.
\newblock {Open intersection numbers, Kontsevich-Penner model and cut-and-join
  operators}.
\newblock {\em arXiv preprint arXiv:1412.3772}, 2014.

\bibitem{Alexandrov:2013fj}
Alexander Alexandrov and Anton Zabrodin.
\newblock Free fermions and tau-functions.
\newblock {\em J. Geom. Phys.}, 67:37--80, 2013.

\bibitem{MR2944483}
Julia Bennett, David Cochran, Brad Safnuk, and Kaitlin Woskoff.
\newblock Topological recursion for symplectic volumes of moduli spaces of
  curves.
\newblock {\em Michigan Math. J.}, 61(2):331--358, 2012.

\bibitem{Bergere:2009ve}
Michel Bergere and Bertrand Eynard.
\newblock Determinantal formulae and loop equations.
\newblock {\em arXiv preprint arxiv:0901.3273}, 2009.

\bibitem{Bergere:2012dp}
Michel Bergere, Bertrand Eynard, Olivier Marchal, and Aleix Prats-Ferrer.
\newblock Loop equations and topological recursion for the arbitrary-$\beta$
  two-matrix model.
\newblock {\em Journal of High Energy Physics}, 2012(3):1--79, 2012.

\bibitem{MR3335006}
Ga{\"e}tan Borot and Bertrand Eynard.
\newblock All order asymptotics of hyperbolic knot invariants from
  non-perturbative topological recursion of {A}-polynomials.
\newblock {\em Quantum Topol.}, 6(1):39--138, 2015.

\bibitem{MR2746135}
Ga{\"e}tan Borot, Bertrand Eynard, Motohico Mulase, and Brad Safnuk.
\newblock A matrix model for simple {H}urwitz numbers, and topological
  recursion.
\newblock {\em J. Geom. Phys.}, 61(2):522--540, 2011.

\bibitem{Bouchard:2013ss}
Vincent Bouchard and Bertrand Eynard.
\newblock Think globally, compute locally.
\newblock {\em Journal of High Energy Physics}, 2013(2):1--35, 2013.

\bibitem{MR3147410}
Vincent Bouchard, Joel Hutchinson, Prachi Loliencar, Michael Meiers, and
  Matthew Rupert.
\newblock A generalized topological recursion for arbitrary ramification.
\newblock {\em Ann. Henri Poincar\'e}, 15(1):143--169, 2014.

\bibitem{MR2874239}
E.~Br{\'e}zin and S.~Hikami.
\newblock On an {A}iry matrix model with a logarithmic potential.
\newblock {\em J. Phys. A}, 45(4):045203, 26, 2012.

\bibitem{Buryak:2014kx}
A~Buryak.
\newblock {Equivalence of the open KdV and the open Virasoro equations for the
  moduli space of Riemann surfaces with boundary}.
\newblock {\em arXiv preprint arXiv:1409.3888}, 2014.

\bibitem{Buryak:2015uq}
Alexandr Buryak and Ran~J Tessler.
\newblock {Matrix models and a proof of the open analog of Witten's
  conjecture}.
\newblock {\em arXiv preprint arXiv:1501.07888}, 2015.

\bibitem{MR3165804}
L.~O. Chekhov, B.~Eynard, and O.~Marchal.
\newblock Topological expansion of the {$\beta$}-ensemble model and quantum
  algebraic geometry in the sectorwise approach.
\newblock {\em Theoret. and Math. Phys.}, 166(2):141--185, 2011.
\newblock Russian version appears in Teoret. Mat. Fiz. {{\bf{1}}66} (2011), no.
  2, 163--215.

\bibitem{MR2222762}
Leonid Chekhov and Bertrand Eynard.
\newblock Hermitian matrix model free energy: {F}eynman graph technique for all
  genera.
\newblock {\em J. High Energy Phys.}, (3):014, 18 pp. (electronic), 2006.

\bibitem{MR1083914}
Robbert Dijkgraaf, Herman Verlinde, and Erik Verlinde.
\newblock Loop equations and {V}irasoro constraints in nonperturbative
  two-dimensional quantum gravity.
\newblock {\em Nuclear Phys. B}, 348(3):435--456, 1991.

\bibitem{Do:2012lh}
Norman Do, Oliver Leigh, and Paul Norbury.
\newblock Orbifold {H}urwitz numbers and {E}ynard-{O}rantin invariants.
\newblock {\em arXiv preprint arXiv:1212.6850}, 2012.

\bibitem{Dumitrescu:2013zr}
Olivia Dumitrescu, Motohico Mulase, Brad Safnuk, and Adam Sorkin.
\newblock The spectral curve of the {E}ynard-{O}rantin recursion via the
  {L}aplace transform.
\newblock In {\em Algebraic and geometric aspects of integrable systems and
  random matrices}, volume 593 of {\em Contemp. Math.}, pages 263--315. Amer.
  Math. Soc., Providence, RI, 2013.

\bibitem{Dunin-Barkowski:2014ij}
P~Dunin-Barkowski, N~Orantin, S~Shadrin, and L~Spitz.
\newblock Identification of the givental formula with the spectral curve
  topological recursion procedure.
\newblock {\em Communications in Mathematical Physics}, 328(2):669--700, 2014.

\bibitem{Eynard:2007kx}
B.~Eynard and N.~Orantin.
\newblock Invariants of algebraic curves and topological expansion.
\newblock {\em Commun. Number Theory Phys.}, 1(2):347--452, 2007.

\bibitem{MR3339157}
B.~Eynard and N.~Orantin.
\newblock Computation of open {G}romov-{W}itten invariants for toric
  {C}alabi-{Y}au 3-folds by topological recursion, a proof of the {BKMP}
  conjecture.
\newblock {\em Comm. Math. Phys.}, 337(2):483--567, 2015.

\bibitem{MR2855174}
Bertrand Eynard.
\newblock Recursion between {M}umford volumes of moduli spaces.
\newblock {\em Ann. Henri Poincar\'e}, 12(8):1431--1447, 2011.

\bibitem{MR2849645}
Bertrand Eynard, Motohico Mulase, and Bradley Safnuk.
\newblock The {L}aplace transform of the cut-and-join equation and the
  {B}ouchard-{M}ari\~no conjecture on {H}urwitz numbers.
\newblock {\em Publ. Res. Inst. Math. Sci.}, 47(2):629--670, 2011.

\bibitem{Eynard:2007if}
Bertrand Eynard and Nicolas Orantin.
\newblock {Weil-Petersson volume of moduli spaces, Mirzakhani's recursion and
  matrix models}.
\newblock {\em arXiv preprint arXiv:0705.3600}, 2007.

\bibitem{MR0335789}
John~D. Fay.
\newblock {\em Theta functions on {R}iemann surfaces}.
\newblock Lecture Notes in Mathematics, Vol. 352. Springer-Verlag, Berlin-New
  York, 1973.

\bibitem{Ferrer:2010fk}
A~Prats Ferrer.
\newblock New recursive residue formulas for the topological expansion of the
  {C}auchy matrix model.
\newblock {\em Journal of High Energy Physics}, 2010(10):1--52, 2010.

\bibitem{MR723457}
Michio Jimbo and Tetsuji Miwa.
\newblock Solitons and infinite-dimensional {L}ie algebras.
\newblock {\em Publ. Res. Inst. Math. Sci.}, 19(3):943--1001, 1983.

\bibitem{KazarianMasterEqNotes}
Maxim Kazarian.
\newblock Unpublished notes.

\bibitem{Kazarian:2014ys}
Maxim Kazarian and Peter Zograf.
\newblock {Virasoro constraints and topological recursion for Grothendieck's
  dessin counting}.
\newblock {\em arXiv preprint arXiv:1406.5976}, 2014.

\bibitem{Kontsevich:1992fv}
Maxim Kontsevich.
\newblock {Intersection theory on the moduli space of curves and the matrix
  Airy function}.
\newblock {\em Communications in Mathematical Physics}, 147(1):1--23, 1992.

\bibitem{MR3268770}
Paul Norbury and Nick Scott.
\newblock Gromov-{W}itten invariants of {$\mathbb{P}^1$} and {E}ynard-{O}rantin
  invariants.
\newblock {\em Geom. Topol.}, 18(4):1865--1910, 2014.

\bibitem{Pandharipande:2014fk}
Rahul Pandharipande, Jake~P Solomon, and Ran~J Tessler.
\newblock {Intersection theory on moduli of disks, open KdV and Virasoro}.
\newblock {\em arXiv preprint arXiv:1409.2191}, 2014.

\bibitem{Tessler:2015ys}
Ran~J Tessler.
\newblock The combinatorial formula for open gravitational descendents.
\newblock {\em arXiv preprint arXiv:1507.04951}, 2015.

\bibitem{MR1144529}
Edward Witten.
\newblock Two-dimensional gravity and intersection theory on moduli space.
\newblock In {\em Surveys in differential geometry ({C}ambridge, {MA}, 1990)},
  pages 243--310. Lehigh Univ., Bethlehem, PA, 1991.

\end{thebibliography}

\end{document}